\newtheorem{theorem}{Theorem}%[section]
\newtheorem{lemma}[theorem]{Lemma}
\newtheorem{proposition}[theorem]{Proposition}
\newenvironment{proof}[1][Proof]{\begin{trivlist}
\item[\hskip \labelsep {\bfseries #1}]}{\end{trivlist}}
\newenvironment{definition}[1][Definition]{\begin{trivlist}
\item[\hskip \labelsep {\bfseries #1}]}{\end{trivlist}}
\newcommand{\qed}{\nobreak \ifvmode \relax \else
      \ifdim\lastskip<1.5em \hskip-\lastskip
      \hskip1.5em plus0em minus0.5em \fi \nobreak
      \vrule height0.75em width0.5em depth0.25em\fi}
\begin{document}

%\preprint{switch_advantage}

\title{Exponential Communication Complexity Advantage from Quantum Superposition of the Direction of Communication}

\author{Philippe Allard Gu\'{e}rin}
\affiliation{Faculty of Physics, University of Vienna, Boltzmanngasse 5, 1090 Vienna, Austria}
\affiliation{Institute for Quantum Optics and Quantum Information (IQOQI), Austrian Academy of Sciences, Boltzmanngasse 3, 1090 Vienna, Austria}

\author{Adrien Feix}
\affiliation{Faculty of Physics, University of Vienna, Boltzmanngasse 5, 1090 Vienna, Austria}
\affiliation{Institute for Quantum Optics and Quantum Information (IQOQI), Austrian Academy of Sciences, Boltzmanngasse 3, 1090 Vienna, Austria}

\author{Mateus Ara\'{u}jo}
\affiliation{Faculty of Physics, University of Vienna, Boltzmanngasse 5, 1090 Vienna, Austria}
\affiliation{Institute for Quantum Optics and Quantum Information (IQOQI), Austrian Academy of Sciences, Boltzmanngasse 3, 1090 Vienna, Austria}

\author{\v{C}aslav Brukner}
\affiliation{Faculty of Physics, University of Vienna, Boltzmanngasse 5, 1090 Vienna, Austria}
\affiliation{Institute for Quantum Optics and Quantum Information (IQOQI), Austrian Academy of Sciences, Boltzmanngasse 3, 1090 Vienna, Austria}

\date{\today}% It is always \today, today,
             %  but any date may be explicitly specified

\begin{abstract}
In communication complexity, a number of distant parties have the task of calculating a distributed function of their inputs, while minimizing the amount of communication between them. It is known that with quantum resources, such as entanglement and quantum channels, one can obtain significant reductions in the communication complexity of some tasks. In this work, we study the role of the quantum superposition of the direction of communication as a resource for communication complexity.  We present a tripartite communication task for which such a superposition allows for an exponential saving in communication, compared to one-way quantum (or classical) communication; the advantage also holds when we allow for protocols with bounded error probability.

%\begin{description}
%\item[Usage]
%Secondary publications and information retrieval purposes.
%\item[PACS numbers]
%May be entered using the \verb+\pacs{#1}+ command.
%\item[Structure]
%You may use the \texttt{description} environment to structure your abstract;
%use the optional argument of the \verb+\item+ command to give the category of each item. 
%\end{description}
\end{abstract}

%\pacs{Valid PACS appear here}% PACS, the Physics and Astronomy
                             % Classification Scheme.
%\keywords{Suggested keywords}%Use showkeys class option if keyword
                              %display desired
\maketitle

Quantum resources make it possible to solve certain communication and computation problems more efficiently than what is classically possible. In communication complexity problems, a number of parties have to calculate a distributed function of their inputs while reducing the amount of communication between them~\cite{Yao1979, Kushilevitz1996}. The minimal amount of communication is called the \textit{complexity of the problem}. For some communication complexity tasks, the use of shared entanglement and quantum communication significantly reduces the complexity as compared to protocols exploiting shared classical randomness and classical communication~\cite{Yao1993, Buhrman2010}. Important early examples for which quantum communication yields an exponential reduction in communication complexity over classical communication are the distributed Deutsch-Jozsa problem~\cite{Buhrman1998} and Raz's problem~\cite{Raz1999}.

Quantum computation and communication are typically assumed to happen on a definite causal structure, where the order of the operations carried on a quantum system is fixed in advance. However, the interplay between general relativity and quantum mechanics might force us to consider more general situations in which the metric, and hence the causal structure, is indefinite. Recently, a quantum framework has been developed with no assumption of a global causal order~\cite{Oreshkov2012, Araujo2015, Oreshkov2015}. This framework can also be used to study quantum computation beyond the circuit model, for instance using the ``quantum switch'' as a resource --- a qubit coherently controlling the order of the gates in a quantum circuit~\cite{Chiribella2013}. It has recently been realized experimentally~\cite{Procopio2015}.

It was shown that this new resource provides a reduction in complexity to $n$ black-box queries in a problem for which the optimal quantum algorithm with fixed order between the gates requires a number of queries that scales as $n^2$~\cite{Araujo2014_PRL}. The quantum switch is also useful in communication complexity; a task has been found for which the quantum switch yields an increase in the success probability, yet no advantage in the asymptotic scaling of the communication complexity was found~\cite{Feix2015}. Most generally, no information processing task is known for which the quantum switch (or any other causally indefinite resource) would provide an exponential advantage over causal quantum (or classical) algorithms. 

Here we find a tripartite communication complexity task for which there is an exponential separation in communication complexity between the protocol using the quantum switch and any causally ordered quantum communication scheme. The task requires no promise on inputs and is inspired by the problem of deciding whether a pair of unitary gates commute or anticommute, which can be solved by the quantum switch with only one query of each unitary~\cite{Chiribella2012}. If the parties are causally ordered, the number of qubits that needs to be communicated to accomplish the task scales linearly with the number of input bits, whereas the protocol based on the quantum switch only requires logarithmically many communicated qubits. This shows that causally indefinite quantum resources can provide an exponential advantage over causally ordered quantum resources (i.e., entanglement and one-way quantum channels).

The tripartite causally ordered communication scenario we consider in this paper is illustrated in Fig.~\ref{fig:causally_ordered}. Alice and Bob are respectively given inputs $x \in X$ and $y \in Y$, taken from finite sets $X$, $Y$. There is a third party, Charlie, whose goal is to calculate a binary function $f(x,y)$ of Alice's and Bob's inputs, while minimizing the amount of communication between all three parties. We shall first assume that communication is one-way only: from Alice to Bob and from Bob to Charlie. Furthermore, we grant the parties access to unrestricted local computational power and unrestricted shared entanglement. We will also consider bounded error communication, in which the protocol must succeed on all inputs with an error probability smaller than $\epsilon$.

In quantum communication, the parties communicate with each other by sending quantum systems. Conditionally on their inputs, the parties may apply general quantum operations to the received system, and then send this system out. We require that the parties' local laboratories receive a system \textit{only once} from the outside environment. We impose this requirement to exclude sequential communication, in which the parties communicate back and forth by sending quantum systems to each other at different time steps. Alice's laboratory has an input and output quantum state, consisting of $N_{A_I}$ and $N_{A_O}$ qubits, respectively; similar notation is used for Bob's and Charlie's systems. We seek to succeed at the communication task on all inputs with error probability lower than $\epsilon$, while minimizing the number of communicated qubits $N := N_{A_O} + N_{B_O}$. The optimal causally ordered strategy is for Bob to calculate $f(x,y)$ and then communicate the result to Charlie using one bit of communication; in this case $N_{A_O}$ is a good lower bound for $N$.
\begin{figure}[t]
\includegraphics[width=7cm]{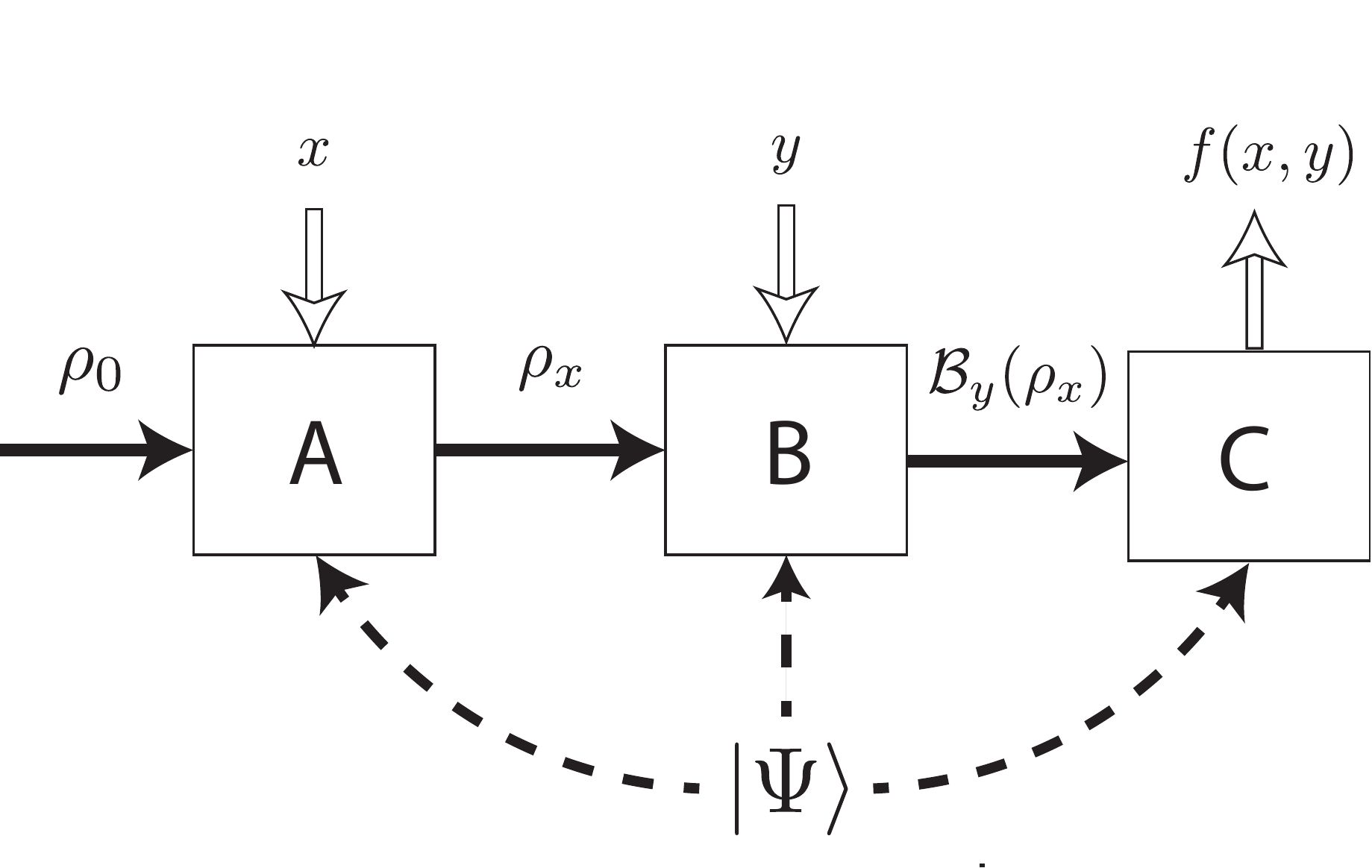}
\caption{Causally ordered quantum communication complexity scenario. Conditionally on their inputs $x$ and $y$, Alice sends a state $\rho_x$ to Bob, who then applies a CP map $\mathcal{B}_y$ and sends the system to Charlie. The unlimited entanglement shared between the parties is represented by $\ket{\Psi}$. The optimal causally ordered protocol is the one that minimizes the number of qubits in $\rho_x$ (which is a lower bound for the communication complexity of the task)}.
\centering
\label{fig:causally_ordered}
\end{figure}

The communication complexity of any causally ordered tripartite communication complexity task can be bounded by considering the bipartite task obtained by identifying Bob and Charlie as a single party. Bearing this in mind, we prove a tight lower bound on the quantum communication complexity of an important family of one-way bipartite deterministic (error probability $\epsilon = 0$) communication tasks, which in turn implies a lower bound on the communication complexity of causally ordered tripartite tasks. This result appears in Theorem 5 of Ref.~\cite{Klauck2000}, but we present a different proof here.
\begin{lemma}
\label{lemma:rows}
For deterministic one-way evaluation of any binary distributed function $f: X \times Y \to \{0,1\}$ such that $\forall x_1, x_2 \in X$, with $x_1 \neq x_2$, $\exists y \in Y$ for which $f(x_1, y) \neq f(x_2, y)$, the minimum Hilbert space dimension of the system sent between two parties sharing an arbitrary amount of entanglement is $ d = \bigl\lceil \sqrt{|X|} \, \bigr\rceil$. Equivalently, the minimum number of communicated qubits is $\lceil \log_2 d \rceil$.
\end{lemma}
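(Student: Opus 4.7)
The plan is to prove the result by establishing matching upper and lower bounds on $d$.

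For the upper bound, I would exhibit an explicit protocol based on superdense coding. Setting $d=\lceil\sqrt{|X|}\,\rceil$, let Alice and Bob share a $d$-dimensional maximally entangled state and injectively label the inputs $x\in X$ by $d^2\geq|X|$ generalized Pauli (Heisenberg--Weyl) operators on $\mathbb{C}^d$. Alice applies the operator indexed by $x$ to her half of the shared state and transmits that half through the channel. Bob then holds one of $d^2$ mutually orthogonal maximally entangled states, performs the Bell measurement to recover $x$ exactly, and outputs $f(x,y)$.

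For the lower bound, let $\sigma_x\in\mathcal{S}(B\otimes M)$ denote Bob's state after Alice's transmission, where $M$ is the $d$-dimensional message register and $B$ is Bob's (arbitrarily large) half of the entanglement. Two structural facts carry the argument. First, because Alice applies a CPTP map only on her side of the entangled state before sending $M$, the reduced state $\mathrm{Tr}_M\,\sigma_x=\rho_B$ is independent of $x$. Second, the row-distinctness hypothesis on $f$ implies that for every pair $x_1\neq x_2$ there is a witnessing $y$ whose associated measurement perfectly separates $\sigma_{x_1}$ from $\sigma_{x_2}$; since perfect distinguishability by a POVM forces orthogonal supports, the family $\{\sigma_x\}_{x\in X}$ is pairwise orthogonal and admits a single projective measurement $\{\Pi_x\}$ that decodes $x$ with certainty.

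With these in hand I would apply a Holevo-type argument to the classical--quantum state $\tau_{XBM}=\tfrac{1}{|X|}\sum_x |x\rangle\langle x|_X\otimes\sigma_x^{BM}$. Deterministic decoding and Holevo's theorem give $\log_2|X|\leq\chi=I(X;BM)_\tau$, while $x$-independence of $\rho_B$ makes $\tau_{XB}$ a product state, so $I(X;B)_\tau=0$ and the chain rule reduces this to $\chi=I(X;M|B)_\tau$. The standard inequality $I(X;M|B)\leq 2\log_2\dim M$, obtained from $I(X;M|B)=S(M|B)-S(M|XB)$ together with $|S(M|\,\cdot\,)|\leq\log_2\dim M$, then yields $\log_2|X|\leq 2\log_2 d$, so that $d\geq\lceil\sqrt{|X|}\,\rceil$ as claimed. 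The main technical obstacle is the reduction to pairwise orthogonality: the deterministic protocol only guarantees a separating POVM for each individual $y$, and upgrading this to a single joint decoder requires the observation that perfect distinguishability of two quantum states entails orthogonal supports. Once orthogonality is in hand, the entropic inequalities used above are entirely standard.
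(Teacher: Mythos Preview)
Your proposal is correct. Both your argument and the paper's hinge on the same key observation---that the row-distinctness hypothesis together with zero-error evaluation forces the family $\{\sigma_x\}$ to be pairwise orthogonal---but they diverge in how the dimension bound is extracted from this. The paper simply invokes the known superdense-coding capacity result (Hausladen \emph{et al.}): at most $d^2$ mutually orthogonal states can be transmitted through a $d$-dimensional channel with shared entanglement, so if $d<\bigl\lceil\sqrt{|X|}\,\bigr\rceil$ a pigeonhole collision produces two non-orthogonal $\sigma_{x_1},\sigma_{x_2}$, contradicting the existence of a separating $y$. You instead prove that capacity bound from scratch: the no-signalling constraint $\mathrm{Tr}_M\sigma_x=\rho_B$ kills $I(X;B)$, and the conditional-entropy bound $|S(M\,|\,\cdot\,)|\le\log_2 d$ then caps $I(X;M\,|\,B)$ at $2\log_2 d$, which combined with Holevo and perfect decodability yields $\log_2|X|\le 2\log_2 d$. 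Your route is longer but fully self-contained, and it makes explicit the role of the fixed marginal $\rho_B$, which in the paper is hidden inside the cited result. One minor remark: once you have pairwise orthogonality you already have a perfect decoder for $x$, so strictly speaking you do not even need Holevo's inequality---the accessible information equals $\log_2|X|$ on the nose---but invoking Holevo does no harm.
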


\begin{proof}
We recall a well-known result of quantum information~\cite{Hausladen1996}, establishing that if Alice and Bob share unlimited entanglement, the largest number of orthogonal (perfectly distinguishable) states that Alice can transmit to Bob by sending a $d$-dimensional system is $d^2$. Therefore, they can deterministically compute $f$ if Alice sends a system of  Hilbert space dimension $\bigl\lceil \sqrt{|X|} \, \bigr\rceil$.

Suppose by way of contradiction that the Hilbert space dimension of the communicated system is only $( \bigl\lceil \sqrt{|X|} \, \bigr\rceil - 1)$. The maximal number of orthogonal states that can be transmitted by Alice to Bob is $(\bigl\lceil \sqrt{|X|} \, \bigr\rceil - 1)^2 < |X|$. Therefore, there exist inputs $x_1, x_2 \in X$ such that the corresponding states $\rho_1$, $\rho_2$ transmitted to Bob are not orthogonal, and thus not perfectly distinguishable~\cite{NielsenChuang}. By our assumption about the function $f$, there exists an input $y \in Y$ such that $f(x_1, y) \neq f(x_2, y)$. Therefore, if Bob receives the input $y$, he will need to distinguish between $\rho_1$ and $\rho_2$ in order to output the function correctly, but this cannot be done with zero error probability.
\qed
\end{proof}

The previous lemma establishes that for a very large class of deterministic communication complexity tasks, it is necessary for Alice to communicate all of her input to Bob. In these cases, the only advantage achieved by causal one-way quantum communication is a reduction by a constant factor of two due to dense coding~\cite{Bennett1992}. An important example of this form is the Inner Product game~\cite{Cleve1998, Nayak2002}. Note that Lemma~\ref{lemma:rows} does not apply to relational tasks such as the hidden matching problem \cite{Bar-Yossef2004}, for which there is an exponential separation between quantum and classical communication complexity.

\begin{figure}[t]
\includegraphics[width=8cm]{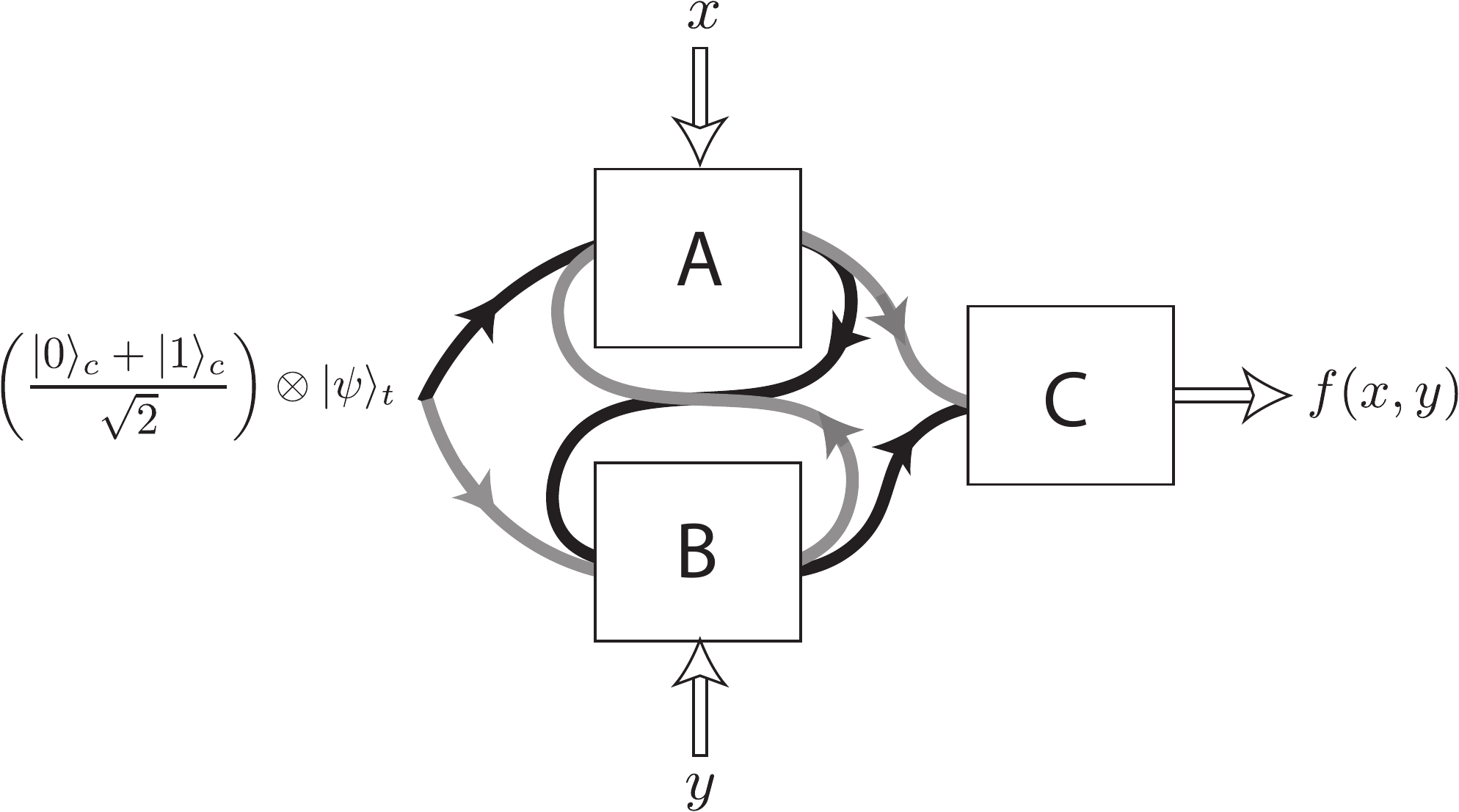}
\caption{Communication complexity setup using the quantum switch. A qubit in the state $\frac{1}{\sqrt{2}} (\ket{0}_c + \ket{1}_c) $ coherently controls the path taken by a system of $N$ qubits in initial state $\ket{\psi}_t$. One path goes first through Alice's lab and then Bob's, while the other path goes first through Bob's lab and then Alice's. Alice and Bob are given classical inputs $x\in X$, $y\in Y$, and Charlie (using the control qubit) computes a binary function of their inputs $f(x,y)$}.
\centering
\label{fig:switch}
\end{figure}

We now seek to establish a communication complexity task for which indefinite causal order can be used as a resource. In the following we assume that the parties have local laboratories, and that they receive a quantum system from the environment only once. They then perform a general quantum operation on their system, and send it out. An example of a noncausally ordered process is the quantum switch~\cite{Chiribella2013}, whose use in the context of communication complexity is shown in Fig.~\ref{fig:switch}. Charlie is in the causal future of both Alice and Bob, and an ancilla qubit coherently controls the causal ordering of Alice and Bob; both the target state and the control qubit are prepared externally. Assume that Alice and Bob apply unitary gates $U_A$ and $U_B$ to their respective input systems of $N$ qubits. The global unitary describing the evolution of the system from Charlie's point of view is 
\begin{equation}
\label{eq:V_switch}
V(U_A, U_B) = \ket{0}\bra{0}_c \otimes (U_B U_A)_t + \ket{1}\bra{1}_c \otimes (U_A U_B)_t,
\end{equation}
where the index $c$ denotes the control qubit, and the unitaries $U_A$ and $U_B$ act on the target Hilbert space of $N$ qubits.

Using the quantum switch, one can determine whether two unitaries $U_A$, $U_B$ commute or anticommute with a single query of each unitary, while at least one unitary  must be queried twice in the causally ordered case~\cite{Chiribella2012}. Explicitly, consider the quantum switch with the control qubit initially in state $\ket{+}_c = \frac{1}{\sqrt{2}} (\ket{0}_c + \ket{1}_c)$ and with initial target state $\ket{\psi}_t$. If $A$ and $B$ apply local unitaries $U_A$ and $U_B$, the resulting state after applying $V(U_A, U_B)$ is
\begin{equation}
\frac{1}{\sqrt{2}} \left( \ket{0}_c \otimes U_B U_A \ket{\psi}_t + \ket{1}_c \otimes U_A U_B \ket{\psi}_t \right).
\end{equation}
If Charlie subsequently applies a Hadamard gate to the control qubit, the resulting state is
\begin{equation}
\frac{1}{2} \left( \ket{0}_c \otimes \{ U_A, U_B \} \ket{\psi}_t - \ket{1}_c \otimes [U_A, U_B] \ket{\psi}_t \right).
\label{eq:comm_anti_state}
\end{equation}

Suppose that Alice and Bob randomly choose unitaries from a set $\mathcal{U}$ and that there exists a state $\ket{\psi}_t$ such that $\forall U,V \in \mathcal{U}$, either $[U, V]\ket{\psi}_t = 0$ or $\{U, V\} \ket{\psi}_t = 0$. Then Eq.~\eqref{eq:comm_anti_state} shows that the quantum switch with initial target state $\ket{\psi}_t$ and control qubit $\ket{+}_c$ as inputs allows Charlie to discriminate between these two possibilities with certainty by measuring the control qubit in the computational basis.

We now define a communication complexity task, the Exchange Evaluation game $EE_n$, for any integer $n$. In this game, Alice and Bob are respectively given inputs $(\mathbf{x},f), (\mathbf{y},g) \in \mathbb{Z}_2^n \times F_n$, 
where $F_n$ is the set of functions over $\mathbb{Z}_2^n$ that evaluate to zero on the zero vector
\begin{equation}
F_n = \left\{ f: \mathbb{Z}_2^n \to \mathbb{Z}_2 \, | \,  f(\mathbf{0}) = 0 \right \}.
\end{equation}
Charlie must output
\begin{equation}
EE_n(\mathbf{x}, f, \mathbf{y}, g) = f(\mathbf{y}) \oplus g(\mathbf{x}),
\end{equation}
where the symbol $\oplus$ denotes addition modulo 2. This game can be interpreted as the sum modulo 2 of two parallel random access codes~\cite{Ambainis1999}. 

We first construct an encoding of the inputs $(\mathbf{x}, f), (\mathbf{y}, g)$ in terms of local $n$-qubit unitaries that all commute or anticommute; we then use the previous observation to conclude that the switch succeeds deterministically at this task with $n$ qubits of communication. We start with some definitions. The group of Pauli $X$ operators on $n$ qubits is defined as
\begin{equation}
X(\mathbf{x}) = X_1^{x_1} \otimes X_2^{x_2} \otimes \dots \otimes X_n^{x_n}, 
\end{equation}
where $x_i$ is the $i$th component of the binary vector $\mathbf{x} \in \mathbb{Z}_2^n$. Here, $X_i$ is the single qubit Pauli $X$-operator acting on the $i$th qubit, and $X_i^0 = \mathbb{I}_i$ is the single qubit identity matrix.

We associate to every $f \in F_n$ a diagonal matrix
\begin{equation}
D(f) = \sum_{\mathbf{z} \in \mathbb{Z}_2^n} (-1)^{f(\mathbf{z})} \ket{\mathbf{z}}\bra{\mathbf{z}},
\end{equation}
where $\ket{\mathbf{z}}$ is the state such that $Z_i \ket{\mathbf{z}} = (-1)^{z_i} \ket{\mathbf{z}}$, with $Z_i$ the single qubit Pauli $Z$ operator acting on qubit $i$. The set $\{ D(f) \}_{f \in F_n}$ consists of all diagonal matrices with entries $\pm 1$ in the computational basis, such that the first entry is $+1$. 

We define the set of unitaries
\begin{equation}
\mathcal{U}_n = \{ X(\mathbf{x})D(f) | (\mathbf{x} , f) \in \mathbb{Z}_2^n \times F_n \},
\label{eq:U_n}
\end{equation}
which has dimension
\begin{equation}
|\mathcal{U}_n| = 2^{2^n +n -1}.
\label{eq:dim_U_n}
\end{equation}
This superexponential scaling of $| \mathcal{U}_n |$ is essential to establish a communication advantage with the quantum switch. Also note that
\begin{equation}
X(\mathbf{x})D(f)X(\mathbf{y})D(g) \ket{\mathbf{0}} = (-1)^{f(\mathbf{y})} \ket{ \mathbf{x \oplus y}}.
\end{equation}
Therefore, when acting on the $n$-qubit input state $\ket{\mathbf{0}}$, the elements of $\mathcal{U}_n$ all commute or anticommute with each other, and 
\begin{align*}
[X(\mathbf{x})D(f),X(\mathbf{y})D(g)] \ket{\mathbf{0}} &= 0 \,, \mathrm{if} \, \,(-1)^{f(\mathbf{y})} = (-1)^{ g(\mathbf{x})} \nonumber \\
\{X(\mathbf{x})D(f),X(\mathbf{y})D(g)\} \ket{\mathbf{0}} &= 0 \,, \mathrm{if} \, \,(-1)^{f(\mathbf{y})} = (-1)^{ g(\mathbf{x})+1} .
\end{align*}

Therefore, the game is equivalent to determining whether the corresponding unitaries $X(\mathbf{x})D(f)$ and $X(\mathbf{y})D(g)$ commute or anticommute \textit{when applied to the state} $\ket{\mathbf{0}}$. By the discussion following Eq.~\eqref{eq:comm_anti_state}, this problem can be solved deterministically by Charlie using the quantum switch with $O(n)$ qubits of communication from Alice to Bob, with a strategy consisting of applying the unitary corresponding to their input according to Eq.~\ref{eq:U_n}.

We now show that the Exchange Evaluation game satisfies the conditions of Lemma~\ref{lemma:rows}; this will allow us to conclude that for deterministic ($\epsilon = 0$) evaluation in the one-way causally ordered case, $EE_n$ requires an amount of communicated qubits that grows exponentially with $n$.
\begin{proposition}
For every $(\mathbf{x_1}, f_1), (\mathbf{x_2}, f_2) \in \mathbb{Z}_2^n \times F_n$, such that $(\mathbf{x_1},f_1) \neq (\mathbf{x_2}, f_2)$, there exists $(\mathbf{y}, g) \in \mathbb{Z}_2^n \times F_n$ such that $EE_n(\mathbf{x_1}, f_1, \mathbf{y}, g) \neq EE_n(\mathbf{x_2}, f_2, \mathbf{y},g)$.
\label{prop:EE_condition}
\end{proposition}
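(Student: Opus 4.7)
The plan is to reformulate the desired inequality as a single Boolean equation and then dispatch two straightforward cases. Unfolding the definition of $EE_n$, the condition $EE_n(\mathbf{x_1}, f_1, \mathbf{y}, g) \neq EE_n(\mathbf{x_2}, f_2, \mathbf{y},g)$ is equivalent to
\[
[f_1(\mathbf{y}) \oplus f_2(\mathbf{y})] \oplus [g(\mathbf{x_1}) \oplus g(\mathbf{x_2})] = 1,
\]
so I would try to independently control the two bracketed bits by choosing $\mathbf{y}$ and $g$ appropriately.

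First I would handle the case $\mathbf{x_1} = \mathbf{x_2}$. Here the second bracket vanishes for every $g$, so success reduces to finding $\mathbf{y}$ with $f_1(\mathbf{y}) \neq f_2(\mathbf{y})$. Since $(\mathbf{x_1},f_1) \neq (\mathbf{x_2},f_2)$ and the first entries agree, the second entries must differ as functions, so such a $\mathbf{y}$ exists (it is necessarily nonzero, because both $f_1$ and $f_2$ lie in $F_n$ and therefore agree at $\mathbf{0}$). Any $g \in F_n$, for example the identically zero function, then completes the witness.

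For the remaining case $\mathbf{x_1} \neq \mathbf{x_2}$, I would simply take $\mathbf{y} = \mathbf{0}$: since $f_1, f_2 \in F_n$, this annihilates the first bracket, so it suffices to exhibit some $g \in F_n$ with $g(\mathbf{x_1}) \oplus g(\mathbf{x_2}) = 1$. The only constraint on $g$ is $g(\mathbf{0}) = 0$; as $\mathbf{x_1}$ and $\mathbf{x_2}$ are distinct, at most one of them is $\mathbf{0}$, so at least one of $g(\mathbf{x_1})$ and $g(\mathbf{x_2})$ is unconstrained by the defining condition of $F_n$. I can therefore freely pick $g$ that assigns these two points the values $0$ and $1$ in whichever order is compatible with $g(\mathbf{0})=0$ (and arbitrary values elsewhere).

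I do not expect a genuine obstacle here: the argument is essentially bookkeeping. The only subtlety is the zero-normalization built into $F_n$, which is neatly side-stepped by forcing $\mathbf{y}=\mathbf{0}$ (to exploit the analogous normalization on $f_1,f_2$) in exactly the case where the hypothesis $\mathbf{x_1} \neq \mathbf{x_2}$ frees up the values of $g$ at those points.
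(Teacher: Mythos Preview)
Your proof is correct and follows essentially the same approach as the paper: both reformulate the inequality as the single parity condition $f_1(\mathbf{y}) \oplus f_2(\mathbf{y}) \oplus g(\mathbf{x_1}) \oplus g(\mathbf{x_2}) = 1$ and then handle the two natural cases by choosing $\mathbf{y}=\mathbf{0}$ with an indicator-type $g$ when $\mathbf{x_1}\neq\mathbf{x_2}$, and choosing $g=0$ with $\mathbf{y}$ a point where $f_1,f_2$ differ otherwise. The only cosmetic difference is that the paper splits into the (overlapping) cases $\mathbf{x_1}\neq\mathbf{x_2}$ versus $f_1\neq f_2$, whereas you use the disjoint split $\mathbf{x_1}=\mathbf{x_2}$ versus $\mathbf{x_1}\neq\mathbf{x_2}$.
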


\begin{proof}
First note that $EE_n(\mathbf{x_1}, f_1, \mathbf{y}, g) \neq EE_n(\mathbf{x_2}, f_2, \mathbf{y}, g)$ if and only if
\begin{equation}
f_1(\mathbf{y}) \oplus f_2(\mathbf{y}) \oplus g(\mathbf{x_1}) \oplus g(\mathbf{x_2})= 1.
\label{eq:rows_condition_game}
\end{equation}
Then, since $(\mathbf{x_1}, f_1) \neq (\mathbf{x_2}, f_2)$, either $\mathbf{x_1} \neq \mathbf{x_2}$ or $f_1 \neq f_2$ holds. We check that the conditions of the lemma are satisfied in both cases.

\paragraph{(i) Case where $\mathbf{x_1} \neq \mathbf{x_2}$:\\}

Suppose without loss of generality that $\mathbf{x_1} \neq \mathbf{0}$ and define $g$ as the function such that $g(\mathbf{x_1}) = 1$ and $g(\mathbf{z}) = 0, \, \forall \mathbf{z} \neq \mathbf{x_1}$. Also, because $f_1, f_2 \in F_n$, $f_1(\mathbf{0}) = f_2(\mathbf{0}) = 0$. Therefore, the function $g$ we just defined and $\mathbf{y} = \mathbf{0}$ satisfy Eq.~\eqref{eq:rows_condition_game}.

\paragraph{(ii) Case where $f_1 \neq f_2$:\\}

Let $\mathbf{y} \in \mathbb{Z}_2^n$ be a vector for which $f_1$ and $f_2$ differ, so that $f_1(\mathbf{y}) + f_2(\mathbf{y}) = 1$. Then this $\mathbf{y}$ and the zero function $g(\mathbf{x}) = 0 \, \forall \mathbf{x}$ satisfies Eq.~\eqref{eq:rows_condition_game}.\qed
\end{proof}

According to Eq.~\eqref{eq:dim_U_n}, the dimension of the set of inputs to $EE_n$ is $|\mathcal{U}_n| = 2^{2^n +n -1}$. Direct application of Proposition \ref{prop:EE_condition} with Lemma \ref{lemma:rows} establishes that the number of qubits of communication required for deterministic success in the causally ordered case is $\frac{1}{2} \log_2 |\mathcal{U}_n|  = \frac{1}{2}(2^n + n -1) = \Omega(2^n)$, using dense coding. In comparison, we have seen that with the quantum switch as a resource, we need only $n$ qubits of communication between Alice and Bob to calculate this function. We thus conclude that for the Exchange Evaluation game, there is an exponential separation in the deterministic communication complexity of $EE_n$.

Note that with two-way (classical) communication, it is possible to solve the Exchange Evaluation game with $2n + 2$ bits of communication, simply by having Alice and Bob send their vectors $\mathbf{x}$, $\mathbf{y}$ to the other party, followed by local evaluation of $f(\mathbf{y})$ and $g(\mathbf{x})$ by the parties and communication of the result to Charlie. We emphasize that once we allow two-way communication, the quantum advantage can also disappear in traditional quantum communication complexity (comparing causally ordered quantum communication with classical communication): this is the case for the distributed Deutsch-Jozsa problem~\cite{Buhrman1998}, but not for Raz's problem~\cite{Klartag2011}.

For causally ordered communication complexity tasks, the exponential quantum-classical separation does not always continue to hold when allowing for protocols to have a small but nonzero error probability $\epsilon > 0$. Indeed, looking at early examples of tasks, the advantage disappears for the distributed Deutsch-Jozsa problem~\cite{Buhrman1998}, while it remains for Raz's problem~\cite{Raz1999}. We prove in the Appendix that the one-way quantum communication complexity with bounded error for $EE_n$ scales as $\Omega(2^n)$, and thus that the exponential separation in communication complexity due to superposition of causal ordering persists when allowing for a nonzero error probability.

To show that it is possible to operationally distinguish quantum control of causal order from two-way communication one could introduce counters at the output ports of Alice's and Bob's laboratories, whose role is to count the number of uses of the channels. Such an argument has already been made in Ref.~\cite{Araujo2014_PRL} to justify a computational advantage. We can model a counter as a qutrit initially in the state $\ket{0}$, whose evolution when a system exits the laboratory is $\ket{i} \to \ket{i + 1 \, \mathrm{mod} \, 3}$, where $i \in \{0,1,2\}$. Then, for both one-way communication and the quantum switch, the counters of Alice and Bob will be in the state $\ket{1}$ at the end of the protocol; for genuine two-way communication, at least one of these counters will be in the final state $\ket{2}$. Therefore, the expectation value of the observables $N = \sum_{i=0}^2 \ket{i} \bra{i}$ for the counters allows us to distinguish realizations of the quantum switch, such as~\cite{Procopio2015}, from two-way quantum communication.

In conclusion, we have found a communication complexity task, the Exchange Evaluation game, for which a quantum superposition of the direction of communication --- the quantum switch --- results in an exponential saving in communication when compared to causally ordered quantum communication. An interesting feature of this game is that it is not a promise game, as are most known tasks for which quantum resources have an exponential advantage~\cite{Buhrman2010}.

In future work, it would be interesting to explore other information processing tasks for which the quantum switch -- or other causally indefinite processes -- may yield interesting advantages. For example, one could look at the uses of the quantum switch for secure distributed computation \cite{Yao1982, Lo1997, Buhrman2012, Liu2013}. Indeed, imagine that Alice and Bob both want to learn about the value of $EE_n$, in such a way that the other party does not learn about their inputs. They could achieve this goal by enlisting a third party and using the quantum switch with the $EE_n$ protocol.

We thank Ashley Montanaro for pointing out Ref.~\cite{Klauck2000} to us, used to establish the bounded-error advantage. We acknowledge support from the European Commission project RAQUEL (No. 323970); the Austrian Science Fund (FWF) through the Special Research Programme FoQuS, the Doctoral Programme CoQuS and Individual Project (No. 2462), and the John Templeton Foundation. P. A. G. is also supported by FQRNT (Quebec).

%Bibliography -- Copied from .bbl

%\bibliographystyle{mateus}
\bibliographystyle{abbrv}

\providecommand{\href}[2]{#2}\begingroup\raggedright\endgroup

\appendix

\section{VC-dimension bounds on the bounded error one-way quantum communication complexity}

In this section we show that if the protocol allows for some error probability, bounded by $\epsilon$ for all inputs, the one-way communication complexity of $EE_n$ still scales as $\Omega(2^n)$. As in Fig.~\ref{fig:causally_ordered}, we assume that Alice and Bob share unlimited prior entanglement, and that Alice sends a quantum state to Bob. We note that under the promise that Bob's input function is the zero function $g = 0$, the Exchange Evaluation game reduces to a random access code~\cite{Ambainis1999}, for which optimal bounds on the bounded error communication complexity are known~\cite{Nayak1998}. However, it is more straightforward to apply a bound that uses the concept of VC-dimension~\cite{Vapnik1971}.

\begin{definition}
\textit{VC-dimension.} Let $f:  X \times Y \to \{0, 1\}$. A subset $S \subseteq Y$ is shattered, if $\forall R \subseteq S, \exists x \in X$ such that
\begin{equation}
f(x,y) = \begin{cases}
1, & \text{if $\mathbf{y} \in R$}.\\
0, &  \text{if $\mathbf{y} \in S \setminus R$}.
\end{cases}
\end{equation}
The VC-dimension $VC(f)$ is the size of the largest shattered subset of $Y$.
\end{definition}

Given a function $f(x,y)$, we denote by $Q_\epsilon^{1}(f)$ the one-way (from Alice to Bob) bounded error quantum communication, where $\epsilon$ is the allowed worst-case error, and arbitrary prior shared entanglement is available. We make use of a theorem by Klauck (Theorem 3 of~\cite{Klauck2000}) that relates the bounded error quantum communication complexity of a function to its VC-dimension. 

\begin{theorem}
For all functions $f: X \times Y \to \{0,1\}$, $Q_\epsilon^1 (f) \geq \frac{1}{2} (1 - H(\epsilon)) VC(f)$, where $H(\epsilon)$ is the binary entropy $H(\epsilon) = \epsilon \log(\epsilon) + (1 - \epsilon) \log (1-\epsilon)$
\label{th:klauck}
\end{theorem}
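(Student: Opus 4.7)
The plan is to reduce the problem of computing $f$ on a shattered set to an encoding–decoding scheme of the quantum random access code (QRAC) type, and then appeal to the known information-theoretic lower bound on QRACs in the entanglement-assisted setting.

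First I would extract a QRAC from the assumed protocol. Let $k = VC(f)$ and fix a shattered set $S = \{y_1, \dots, y_k\} \subseteq Y$. By the definition of shattering, for every bit string $\mathbf{r} \in \{0,1\}^k$ there exists $x_{\mathbf{r}} \in X$ with $f(x_{\mathbf{r}}, y_i) = r_i$ for each $i$. Suppose Alice is given a uniformly random $\mathbf{r}$: she selects $x_{\mathbf{r}}$ and runs her part of the hypothetical one-way protocol for $f$, sending $m := Q_\epsilon^1(f)$ qubits to Bob (on top of her half of the shared entangled state). When Bob wishes to recover $r_i$, he runs the decoder with input $y_i$ and obtains $f(x_{\mathbf{r}}, y_i) = r_i$ with probability at least $1 - \epsilon$. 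This is precisely an entanglement-assisted $k \to m$ QRAC whose per-bit success probability is at least $1 - \epsilon$.

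Next I would invoke Nayak's lower bound for QRACs. Fano's inequality applied bitwise yields $I(r_i \,;\, \text{Bob's output for } y_i) \geq 1 - H(\epsilon)$ for each $i$; since the $r_i$ are independent, a chain-rule argument combined with the data-processing inequality gives $I(\mathbf{r} \,;\, \text{Bob's total system}) \geq k(1 - H(\epsilon))$. On the other hand, since the entangled state on Bob's side is independent of $\mathbf{r}$, his accessible information about $\mathbf{r}$ is bounded by the classical capacity of the $m$-qubit channel from Alice, which in the presence of shared entanglement equals $2m$ by superdense coding. Combining both inequalities gives $2m \geq k(1 - H(\epsilon))$, which is the claimed bound $Q_\epsilon^1(f) \geq \tfrac{1}{2}(1 - H(\epsilon)) VC(f)$.

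The technical obstacle is justifying the $2m$ upper bound cleanly. One has to argue, via the Holevo quantity computed on the joint system at Bob's side after Alice's transmission, that the accessible information about $\mathbf{r}$ cannot exceed $2m$ qubits' worth of classical capacity. This is the converse half of the entanglement-assisted classical capacity theorem: the combinatorial reduction is straightforward, but the factor of $\tfrac{1}{2}$ in the statement rides entirely on getting the Holevo–superdense-coding converse right in the presence of pre-shared entanglement.
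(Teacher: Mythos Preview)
The paper does not prove this theorem at all; it is quoted verbatim as Theorem~3 of Klauck~\cite{Klauck2000} and used as a black box to lower-bound $Q_\epsilon^1(EE_n)$. So there is no ``paper's own proof'' to compare against.

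That said, your sketch is the standard argument and is essentially how Klauck's result is obtained. The reduction from a protocol for $f$ on a shattered set $S=\{y_1,\dots,y_k\}$ to a $k\to m$ entanglement-assisted quantum random access code is exactly right, and the two inequalities you combine --- Nayak's $I(\mathbf{r};B)\ge k(1-H(\epsilon))$ via Fano plus independence of the $r_i$, and the entanglement-assisted Holevo converse $I(\mathbf{r};B)\le 2m$ --- are the correct ingredients. For the first inequality the clean route is $I(\mathbf{r};B)=k-H(\mathbf{r}\mid B)\ge k-\sum_i H(r_i\mid B)=\sum_i I(r_i;B)$ by conditional subadditivity, followed by data processing and Fano on each coordinate; your phrasing ``chain-rule argument'' is slightly loose but the conclusion is right. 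The obstacle you flag --- getting the $2m$ bound in the presence of unlimited prior entanglement --- is indeed the only nontrivial quantum-information input, and it is exactly the converse to superdense coding (the Holevo quantity of Bob's post-transmission ensemble is at most $2m$ because Alice touches only an $m$-qubit subsystem of the shared state). Nothing is missing from your outline.
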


Let us bound the VC-dimension of $EE_n: X \times Y \to \{0,1\}$, where $X = Y = \mathbb{Z}_2^n\times F_n$, by showing that $S = \{(\mathbf{y},g) | g = 0, \mathbf{y} \neq \mathbf{0} \} \subset Y$ is shattered. This is clear, since for any $R \subseteq S$, there exists the indicator function
\begin{equation}
f_R(\mathbf{y}) = \begin{cases}
1, & \text{if $(\mathbf{y},0) \in R$}.\\
0, & \text{otherwise},
\end{cases}
\end{equation}
so that $S$ is shattered.

Therefore $VC(EE_n) \geq |S| = 2^{n - 1}$, and Theorem~\ref{th:klauck} implies that the one-way quantum communication complexity $Q_\epsilon^1 ( EE_n) \geq (1 - H(\epsilon)) 2^{n - 2}$. This establishes that the number of communicated qubits scales exponentially with $n$ even in the bounded error case, so that the exponential separation between the quantum switch and one-way quantum communication continues to hold.
\end{document}